\newcommand{\nimrodbug}[1]{#1}
\newcommand{\calS}{\mathcal{S}}
\newcommand{\naturals}{{{\mathbb{N}}}}
\newcommand{\maj}{{\mathrm{maj}}}
\newcommand{\unit}{{\mathrm{unit}}}
\newcommand{\full}{{\mathrm{full}}}
\newcommand{\p}{{\mathrm{P}}}
\newcommand{\np}{{\mathrm{NP}}}
\newcommand{\fpt}{{\mathrm{FPT}}}
\newtheorem{theorem}{Theorem}
\newtheorem{corollary}[theorem]{Corollary}
\newtheorem{remark}{Remark}
\newtheorem{definition}{Definition}
\newtheorem{experiment}{Experiment}
\newtheorem{proposition}[theorem]{Proposition}
\newtheorem{observation}{Observation}
\newtheorem{question}{Question}
\newcommand{\calR}{\mathcal{R}}
\def\row#1#2{{#1}_1,\ldots ,{#1}_{#2}}
\def\2vec#1#2{\left(\begin{array}{c}{#1}\\{#2}\end{array}\right)}
\newcommand{\approval}{{\mathrm{s}}}
\newcommand{\mypara}[1]{\smallskip\noindent\textsc{#1.}}
\title{The Complexity of Multiwinner Voting Rules with Variable Number of Winners}
\author{
  \makebox[0.35\linewidth]{Piotr Faliszewski} \\ 
  {AGH University} \\ 
  {Krakow, Poland}
\and 
  \makebox[0.35\linewidth]{Arkadii Slinko} \\
  {University of Auckland} \\
  {Auckland, New Zealand}
\and 
  \makebox[0.35\linewidth]{Nimrod Talmon} \\
  {Weizmann Institute of Science} \\
  {Rehovot, Israel}
}
\newcommand{\shortcite}[1]{\cite{#1}}
\begin{document}

\maketitle

\begin{abstract}
  We consider the approval-based model of elections, and undertake a
  computational study of voting rules which select committees whose
  size is not predetermined.  While voting rules that output committees with a
  predetermined number of winning candidates are quite well studied,
  the study of elections with variable number of winners %
  has only recently been initiated by Kilgour~\cite{kil:j:variable-size-committee}.
  This paper aims at achieving a better understanding of these rules,
  their computational complexity, and on scenarios for which they might be applicable.

\end{abstract}

\section{Introduction}

We study the setting where a group of agents (the voters) want to
select a set of candidates (a committee) based on these agents'  preferences.
Agents are asked which candidates they approve of for the inclusion into the committee and this input data
needs to be
aggregated. %
However, as opposed to the quickly growing body of work on 
electing committees of a fixed
size~\cite{elk-fal-sko-sli:j:multiwinner-properties,fal-sko-sli-tal:c:hierarchy-committee,azi-gas-gud-mac-mat-wal:c:approval-multiwinner,ama-bar-lan-mar-rie:c:minisum-to-minimax,kil:chapter:approval-multiwinner},
here we are interested in rules that derive both the size of the
winning committee and its members from the voters' preferences.
Recently, Kilgour~\cite{kil:j:variable-size-committee} %
and Duddy et al.~\cite{Duddy2016} initiated a systematic study of
such voting rules;
here we are  interested in the complexity of
computing their winners and in 
experimentally analyzing the sizes of the elected
committees
(for some early axiomatic results, we also
point the reader to the work of Brandl and
Peters~\cite{bra-pet:i:borda-mean}). 

\subsection{When Not To Fix the  Size of the Committee?}
There is a number of settings where it is not natural to fix the size
of the committee to be elected and it is better to deduce it from the
votes.  Since so far committee elections with variable number of
winners did not receive much attention in the AI literature, below we
provide a number of examples of such settings.
We do not mention this repeatedly, but one may wish to automate the
processes in the examples below using AI techniques.

\smallskip

\noindent\textbf{Initial Screening.}\quad
Consider a situation where we need to select one item---among many
possible ones---that has some desirable features.  The final decision
can only be done by a qualified expert, but we have a number of easy
to evaluate (but imperfect) criteria that the selected items should
satisfy (these criteria are soft and it may be that the best item
actually fails some of them). We view each criterion as a voter (who
``approves'' the items that satisfy it) and we seek a committee,
hopefully of a small size, of candidates from which the qualified
expert will choose the final item.\footnote{One of the authors of this
  paper was once tasked with the problem of classifying a collection
  of daggers for a museum. The solution was to compute a number of
  partitions of the set of daggers into clusters, evaluate their
  qualities without reference to ethnographical knowledge on the daggers,
  and to present the best ones to the museum's experts, who chose one
partition that led to an ethnographically meaningful classification.} %

Initial screening is closely related to
shortlisting~\cite{bar-coe:j:non-controversial-k-names,elk-fal-sko-sli:j:multiwinner-properties}. We
use a different name for it to emphasize that we do not fix the number
of candidates to choose, as is the case with shortlisting.\smallskip

\noindent\textbf{Finding a Set of Qualifying Candidates.}\quad 
Finding a set of candidates that satisfy all or almost all criteria is
a common problem.  Real-life examples include selecting baseball
players for inclusion into the Hall of Fame and selecting students to receive an honors degree. %
In the former case, eligible voters (baseball writers) approve up to
ten players and those approved by at least 75\% of the voters are
chosen to the Hall of Fame. In the latter case, the voting process is
typically implicit; the university announces a set of criteria of
excellency---which act as voters, ``approving'' the students that
satisfy them\footnote{The criteria may include, e.g., never receiving
  a low grade from a course, taking some advanced classes, never being
  suspended, etc.}---and set rules such as ``a student receives an honors
degree if he or she meets at least five out of six criteria''. It is
often desirable that the selected committee is %
small (say, at most a few people for the Hall of Fame and some
not-too-large percentage of the students for the honors degree), but
this is not always the case. E.g, consider the task of
selecting people for an in-depth medical check based on a number
of simple criteria that jointly indicate elevated risk of a certain
disease; everyone who is at risk should be checked regardless of the
number of those patients.

One of the first procedures formally proposed for the task of
selecting a group of qualifying candidates was the majority rule (MV), suggested by
Brams, Kilgour, and Sanver~\cite{brams2007minimax}.  The majority
rule outputs the committee that includes all the candidates that are
approved by at least half of the voters (satisfy at least half of the criteria).
It is, of course, natural to consider MV with other thresholds, as, for example, in the Hall of Fame
example.
\smallskip

\noindent\textbf{Partitioning into Homogeneous Groups.}
For the case of partitioning candidates into \emph{homogeneous}
groups, we can no longer focus only on one of the groups (the
committee),
but rather we care about a partition into two groups so that each of
them contains candidates that are as similar as possible. A prime
example here is partitioning students in some class into two groups,
e.g., a group of beginners and a group of advanced ones (say,
regarding, their knowledge of a foreign language; depending on the
setting, it may or may not be important to keep the sizes of these two
groups close). The students are partitioned in this way to facilitate
a better learning environment for everyone; in the context of voting,
the issue of partitioning students was raised by Duddy et
al.~\cite{Duddy2016}. \smallskip

\noindent\textbf{Finding a Representative Committee.} %
An elected committee is representative if each voter approves at least
one committee member (who then can represent this voter).  The idea of
choosing a representative committee of a fixed size received
significant attention in the literature (see the works of Chamberlin
and Courant~\cite{cha-cou:j:cc}, Monroe~\cite{mon:j:monroe},
Elkind et al.~\cite{elk-fal-sko-sli:j:multiwinner-properties}, and
Aziz et
al.~\cite{azi-bri-con-elk-fre-wal:j:justified-representation} as
some examples). However, as pointed out by Brams and
Kilgour~\cite{brams2014satisfaction}, committees of fixed size
simply cannot always provide adequate representation. Thus, in some
applications, it is natural to elect committees without
prespecified sizes (while in others, fixing the committee size may be
necessary).

A representative committee may be desired when some authorities are
revising existing regulations 
and need to consult citizens, for which purpose they would like
to select a representative focus groups in various cities. The people
in a focus group do not have to represent the society proportionally
(their role is to voice opinions and concerns and not to make final
decisions), but should cover all the spectrum of opinions in the society. 
Usually, small representative committees are more desirable than larger ones. 

\subsection{Our Contribution}

Our main goal
is to study computational aspects of
voting rules tailored for elections with a variable number of
winners. This direction was pioneered by Fishburn and
Peke\v{c}~\cite{fishburn2004approval}, who introduced the class
of threshold rules and studied their computational complexity
(somewhat surprisingly, only for the case when the committee size is
fixed).  We are not aware of other computational studies that
followed their work. %

In this paper we study the threshold rules of Fishburn and
Peke\v{c}~\cite{fishburn2004approval}, as well as a number of
other rules, including those discussed by
Kilgour~\cite{kil:j:variable-size-committee}.  We obtain the
following results:
\begin{enumerate}
\item For each of the rules, we establish whether finding a winning
  committee under this rule is in $\p$ or is $\np$-hard (in which case
  we seek $\fpt$ algorithms parameterized by the numbers of candidates
  and by the number of voters).
\item We evaluate experimentally the average sizes of committees
  elected by our rules. %
  We consider a basic model of preferences, where each voter approves
  each candidate independently, with some probability $p$ (we focus on
  $p=\nicefrac{1}{2}$ but for some representative rules we consider a
  larger spectrum of probability values).
\end{enumerate}

We only make %
preliminary comments regarding suitability of our rules for the tasks
outlined above.  While specifying the applications is important to
facilitate future research, we believe that we are still at the level
of identifying voting rules and gathering basic knowledge about them.

\section{Preliminaries}
An approval-based election $E = (C,V)$ consists of a set $C = \{c_1,
\ldots, c_m\}$ of candidates and  a collection $V = (v_1, \ldots,
v_n)$ of voters.  Voters express their preferences by
filling approval ballots. The approval ballot of a voter specifies the subset of candidates that
this voter approves. To simplify notation, we denote voter $v_i$'s
approval ballot also as $v_i$ (whether we mean the voter or the subset of approved 
candidates  will always be clear from the
context).  A collection $V$ of voters, interpreted as a collection of
approval ballots in a certain election, is called the {\em preference profile}. For a given subset
$S$ of the candidates from set $C$, by $\overline{S}$ we mean the
candidates not in $S$, i.e., $\overline{S} = C \setminus S$. By the {\em approval score} of a candidate in an election, we mean the
number of voters that approve of this candidate.

A voting rule for elections with a variable number of winners is a
function $\calR$ that, given an election $E = (C,V)$, returns a family
of subsets of $C$ (the set of committees which tie as winners).  The
main point of difference between the type of voting rules that we
study here and the voting rules typically studied in the context of
multiwinner elections is that we do not fix the size of the committee
to be elected and we let it %
be deduced by the rule. %

For an overview of multiwinner election procedures using approval
balloting, we point to the works of
Kilgour~\cite{kil:chapter:approval-multiwinner,kil:j:variable-size-committee},
both for the discussions of rules with fixed and variable number
of winners; Duddy et al.~\cite{bra-pet:i:borda-mean} and Brandl
and Peters~\cite{Duddy2016} discuss the polynomial-time computable Borda mean rule (not
included in our discussion).

Our hardness results follow by reductions from the $\np$-complete problem \textsc{Set Cover}.
An instance of \textsc{Set Cover}
consists of a set $U = \{u_1, \ldots, u_n\}$ of elements, a family
$\calS = \{S_1, \ldots, S_m\}$ of subsets of $U$, and an integer
$k$; we ask whether exist at most $k$ sets from $\calS$ whose union is~$U$.

\section{Simple Approval Rule} 

We start our discussion by considering the Approval rule (AV), one of
the arguably simple rules for the setting with variable number of winners.

\begin{description}
\item[Approval Voting (AV).]  Under AV we output the (unique)
  committee of candidates with the highest approval score.
\end{description}
In essence, AV is the single-winner Approval rule which instead of
breaking ties (among winning candidates) outputs all the candidates with the highest score. The
rule is, of course, polynomial-time computable.

\begin{proposition}
  There is a polynomial-time algorithm that computes the unique
  winning committee under the AV rule.
\end{proposition}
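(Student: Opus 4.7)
The plan is straightforward since the definition of AV essentially prescribes the algorithm. First, I would iterate over all candidates $c \in C$ and, for each one, compute its approval score by scanning the voter collection $V$ and counting the number of voters $v_i$ with $c \in v_i$. This takes $O(|C| \cdot |V|)$ time in total, since each voter's ballot has size at most $|C|$.

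Next, I would find the maximum approval score $s^\star$ among all candidates (a single linear pass over the computed scores), and then return the set $W = \{c \in C : \score(c) = s^\star\}$, which can also be built in one pass. By the definition of AV given just above, $W$ is exactly the unique winning committee, so correctness is immediate.

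The overall running time is polynomial (in fact linear in the input size up to logarithmic factors for arithmetic), which establishes the claim. There is no real obstacle here: the only subtlety worth flagging is that the uniqueness of the winning committee follows from the definition itself, since AV selects all candidates attaining the top score rather than applying a tie-breaking procedure over candidates, so no additional argument about ties is needed.
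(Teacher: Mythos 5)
Your proof is correct and matches the paper's (implicit) argument exactly: the paper offers no separate proof, simply observing that the rule is ``of course'' polynomial-time computable, which is precisely the score-counting procedure you describe. Your extra remark on uniqueness is consistent with the paper's definition of AV as outputting the single committee of all top-scoring candidates.
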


We should expect the winning committees for this rule to be very small and, indeed,
the following experiment confirms this intuition (the AV rule is so
simple that the experiment is not really necessary; we include it for the
sake of completeness and to  provide
the setup for experiments regarding less intuitive rules). %

\begin{experiment}\label{exp:main}
  We consider elections with $m=20$ candidates and $n=20$ voters,
  where %
  for each candidate $c$, each voter approves $c$ with probability $p =
  \nicefrac{1}{2}$. We have generated $10,000$ elections and the
  average committee size was $1.52$ with standard deviation $0.89$.
  (See Table~\ref{tab:collection} for the list of average committee
  sizes in this setting for our rules; to see how the number of voters
  affects the average committee sizes, we also included
  experiments for $20$ candidates and $100$ voters).
  We repeat this experiment for all the rules in this paper.
\end{experiment}

\begin{remark}
  We have chosen a fairly small number of candidates and voters for
  our experiments as the results for such elections are fast to
  compute (even for $\np$-hard voting rules) and, yet, they appear to
  be sufficient to show the effects that we are interested in. We
  chose the model where each candidate is approved or disapproved
  independently by each voter because it is the most basic scenario
  which, we believe, one should start with (other scenarios should be
  considered in future works).
\end{remark}

\section{(Generalized) Net-Approval Voting}\label{sec:gnav}

In the framework where the size of the target committee is fixed, one
may ask for a committee of candidates whose sum of approval scores is
the highest. To adapt this idea to the variable number of winners, Brams and Kilgour~\cite{brams2014satisfaction} suggested %
the Net-Approval Voting (NAV) rule. This rule pays attention not only to approvals but also to disapprovals.

\begin{description}
\item[Net Approval Voting (NAV).]  The score of a committee $S$ in
  election $E = (C,V)$ under NAV is defined to be $ \sum_{v_i \in
    V}\big(|S \cap v_i| - |S \cap \overline{v_i}|\big)$; the
  committees with the highest score tie as co-winners.
\end{description}

Note that this rule is %
very close to the MV rule~\cite{brams2007minimax} mentioned in the
introduction; the winning committees under NAV consist of all
candidates approved by a strict majority of the voters and any subset
of those approved by exactly half of the voters (MV includes all 
candidates approved by at least half of the voters).

\begin{corollary}
  There is a polynomial-time algorithm that computes the unique smallest\footnote{%
  In fact, there is a polynomial-time algorithm that computes a winning committee of any size, if exists.}
  winning committee under the NAV rule.
\end{corollary}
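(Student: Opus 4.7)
The plan is to exploit the fact that the NAV score decomposes as a sum of independent contributions, one per candidate, so that the optimal committee can be read off directly from the approval counts.

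For a candidate $c$, let $a(c)$ denote its approval score and observe that $|S \cap \overline{v_i}| = |S| - |S \cap v_i|$. A quick rearrangement shows that the NAV score of a committee $S$ in an election with $n$ voters equals
\[
\score(S) \;=\; \sum_{c \in S}\bigl(2\,a(c) - n\bigr).
\]
The first step I would carry out is to write out this identity carefully; the $-n$ summand per member of $S$ is precisely what captures the disapproval penalty.

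Because the expression is a sum over candidates with no interaction between them, maximizing $\score(S)$ reduces to $m$ independent decisions: include $c$ iff its contribution $2a(c)-n$ is positive, and we are indifferent whenever $2a(c)-n=0$. Thus every winning committee must contain all candidates with $a(c) > n/2$, must exclude all candidates with $a(c) < n/2$, and may contain an arbitrary subset of the candidates with $a(c) = n/2$. This matches the description given just before the corollary statement.

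The unique \emph{smallest} winning committee is therefore
\[
S^{*} \;=\; \{\, c \in C : a(c) > n/2 \,\},
\]
which contains no indifferent candidate. Computing $S^{*}$ is immediate: compute $a(c)$ for each $c$ in $O(nm)$ time by scanning the ballots and output those candidates strictly above the threshold. I do not anticipate any real obstacle — the argument is essentially a single linearity-of-score observation — but the one subtlety worth stating explicitly is that smallness is well-defined precisely because the freedom in the family of winning committees lies entirely in the tie-breaking set $\{c : a(c)=n/2\}$, so $S^{*}$ is the unique inclusion-minimal (and hence cardinality-minimal) winner. The footnoted remark (that a winning committee of any prescribed size can also be computed in polynomial time) follows by additionally picking an arbitrary subset of the appropriate size from the indifferent candidates.
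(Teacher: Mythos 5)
Your proof is correct and matches the paper's reasoning: the paper justifies this corollary via the observation immediately preceding it (that NAV winners consist of all candidates approved by a strict majority plus any subset of those approved by exactly half), which is precisely the per-candidate decomposition $\sum_{c\in S}(2a(c)-n)$ you make explicit. You have simply written out the linearity argument that the paper leaves implicit.
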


\begin{experiment}
  Repeating Experiment~\ref{exp:main} for NAV, we obtain $8.25$ as the average
  size of the smallest elected committee with standard
  deviation $2.19$. This confirms the intuition that slightly fewer
  than half of the candidates would be elected in a typical election, where
  each candidate is approved independently with probability $\nicefrac{1}{2}$ 
  (since in the smallest winning committee it is necessary to be
  approved by \emph{more than} half of the voters). 
  We also computed the average size of a NAV committee for the case
  where voters approve candidates with other
  probabilities. Specifically, for each $p \in \{0.05, 0.1, \ldots,
  0.95\}$ we generated $10,000$ elections with $20$ candidates and
  $20$ voters, where each voter approves each candidate with
  probability $p$, and we computed the average size of the NAV
  committee. We repeated the same experiment for $20$ candidates and
  $100$ voters. The results are presented in Figure~\ref{fig:size}.  We see that when the number of voters becomes
  large, the graph becomes very close to the step function. This means
  that NAV should only be used in very specific settings (such as the
  baseball Hall of Fame example).
\end{experiment}

It turns out that, using the main idea behind the NAV rule, it is
possible to express many different voting rules.  Below we suggest a
language for describing such rules.

\begin{description}
\item[Generalized NAV.]  Let $f$ and $g$ be two non-decreasing,
  non-negative-valued functions, $f,g \colon
  \naturals \rightarrow \naturals$, such that $f(0)=g(0)=0$.  We define
  the $(f,g)$-NAV score of a committee $S$ in election $E = (C,V)$ to
  be:
  \[\textstyle
  \sum_{v_i \in V}\big(f(|S \cap v_i|) - g(|S \cap
  \overline{v_i}|)\big).
  \]
  The committees with the highest score tie as co-winners. The
  intuition for this rule is that we would like to be able to count
  approvals and disapprovals differently. %
  E.g., this can be explained as follows: at times, the lack of approval of a
  candidate is not really a disapproval but lack of information about
  him/her or simply no firm opinion.
\end{description}

\begin{remark}
  It would also be reasonable to include the terms $f'(|\overline{S}
  \cap \overline{v_i}|)$ and $-g'(|\overline{S} \cap v_i|)$ (for two
  additional functions $f'$ and $g'$) in the definition of the score
  above.  The first term, for example, would reflect the utility that
  voter $v_i$ has from exclusion of candidates whom he/she did not
  approve.
\end{remark}

$(f,g)$-NAV rules are quite diverse. For example, if $f$ and $g$ are
linear functions (e.g., $f(x) = x$ and $g(x) = 2x$) then $(f,g)$-NAV
is a variant of the MV rule with a different threshold of approval
(for the given example, a candidate would be included in the committee
if it were approved by at least a $\nicefrac{2}{3}$ fraction of the
voters; thus, we refer to this rule also as $\nicefrac{2}{3}$-NAV). Such rules
seem quite appropriate for the task of choosing a set of qualifying
candidates as, for each candidate $c$, the decision whether to
include $c$ in the committee or not is made based on approvals for
$c$ only (indeed, the decision if a patient should be sent for an
in-depth medical check should not depend on the health of other
patients).\footnote{As a side comment, we mention that such rules are
  also typical in the lobbying
  scenarios~\cite{chr-fel-ros-sli:j:lobbying,bin-erd-fer-gol-mat-rei-rot:j:probabilistic-lobbying,neh:c:lobbying-threshold}.}
Below we show that for nonlinear functions $f$ and $g$, $(f,g)$-NAV
rules might no longer have this independence property.

Let us consider the function $t_1(x)$,
where
$t_1(0)=0$ and $t_1(k) = 1$ for each $k \geq 1$.
Then,
the rule $(t_1,0)$-NAV,
where we write $0$ to mean the function that takes value $0$ for all its
inputs, 
seeks committees where each voter approves at least
one committee member. In consequence, the committee that consists of
all candidates is always winning under this rule (and, of course,
also polynomial-time computable). However, it is far more
interesting to seek the smallest $(t_1,0)$-NAV winning committee and
we refer to the rule that outputs such committees as the Minimum
Representation Rule (MRC). A more intuitive description of
this rule follows.

\begin{description}
\item[Minimal Representing Committee rule (MRC).] Under the MRC rule,
  we output all the committees of smallest size such that each voter
  (with a nonempty approval ballot) approves at least one of the committee members.
\end{description}

Intuitively, MRC is very close to the approval variant of the
Chamberlin--Courant
rule~\cite{cha-cou:j:cc,pro-ros-zoh:j:proportional-representation,bet-sli-uhl:j:mon-cc};
we refer to the approval-based Chamberlin--Courant rule as CC. Under
CC, we are given an approval election $E = (C,V)$, a
committee size $k$, and our goal is to find a committee of size $k$
such that as many voters as possible approve at least one of the
committee members (for the case of CC, typically the fact
that a voter approves a candidate is interpreted as saying that the
voter would feel represented by this candidate).  MRC is, in a sense,
a variant of CC where we insist that each voter be represented, but we
want to keep the committee as small as possible.

Since computing an MRC winning committee means, in essence, solving
the minimization version of the \textsc{Set Cover} problem, we next proposition follows
(missing hardness proofs are available in the supplementary material).
\begin{proposition}
  Given an election $E$ and a positive integer $k$, it is $\np$-hard
  to decide if there is an MRC winning committee of size at most $k$.
\end{proposition}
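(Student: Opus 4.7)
The plan is a straightforward reduction from \textsc{Set Cover}, exploiting the fact that MRC is essentially the decision version of the set cover minimization problem phrased in voting terms.

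Given an instance of \textsc{Set Cover} consisting of a universe $U = \{u_1, \ldots, u_n\}$, a family $\calS = \{S_1, \ldots, S_m\}$ of subsets of $U$, and an integer $k$, I would construct an election $E = (C,V)$ as follows. The candidate set is $C = \{c_1, \ldots, c_m\}$, where candidate $c_j$ represents the set $S_j$. The voter collection is $V = (v_1, \ldots, v_n)$, where voter $v_i$ represents the element $u_i$ and approves exactly those candidates $c_j$ such that $u_i \in S_j$. Without loss of generality we may assume every element of $U$ belongs to at least one set in $\calS$ (otherwise no cover exists and the instance is trivially a no-instance), so every voter in $V$ has a nonempty approval ballot. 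The integer bound for the MRC question is taken to be the same $k$.

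Next I would verify correctness. By the construction, a subset $T \subseteq C$ has the property that every voter $v_i$ approves at least one candidate in $T$ if and only if the corresponding family $\{S_j : c_j \in T\}$ covers $U$. Consequently, the smallest committee satisfying the MRC representation condition has size exactly equal to the minimum set cover. Therefore there exists an MRC winning committee of size at most $k$ in $E$ if and only if there exist at most $k$ sets in $\calS$ whose union is $U$. Since the reduction is clearly polynomial-time computable and \textsc{Set Cover} is $\np$-hard, the proposition follows.

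There is essentially no obstacle here; the only mild subtlety is handling voters whose approval ballot would be empty, which is why one either assumes coverage of $U$ is possible to begin with or, equivalently, rejects such instances outright. Everything else is a one-to-one translation between sets and candidates and between elements and voters.
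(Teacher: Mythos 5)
Your reduction is exactly the one the paper uses: sets become candidates, elements become voters approving the candidates whose sets contain them, and the bound $k$ carries over unchanged. The proof is correct and takes essentially the same approach as the paper, with the minor (and harmless) extra care about elements not covered by any set.
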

\begin{proof}
  It suffices to note that our problem is equivalent to the \textsc{Set Cover} problem.
  To see this,
  consider a \textsc{Set Cover} instance with $I =
  (U,\calS, k)$, where $U = \{u_1, \ldots, u_n\}$ is a set of
  elements, $\calS = \{S_1, \ldots, S_m\}$ is a family of subsets of
  $U$, and $k$ is an integer. We form an election $E = (C,V)$ where
  for each set $S_i$ we have a candidate $s_i$ and for each element
  $u_j$ we have a voter that approves exactly those candidates $s_i$
  for which $u_j \in S_i$. There is a winning MRC committee of size at
  most $k$ if and only if there is a collection of at most $k$ sets
  that cover $U$.
\end{proof}

Fortunately, computing MRC winning committees is fixed-parameter
tractable (is in $\fpt$) when parameterized by either the number of
candidates or the number of voters (we omit the proof due to space
restriction, but mention that the ideas are similar to those that we use for
Theorem~\ref{thm:threshold-ilp}).

\begin{proposition}\label{mrchard}
  The problem of deciding if there is an MRC winning committee of size
  at most $k$ (in a given election $E$) is in $\fpt$, when
  parameterized either by the number of candidates or the number of
  voters.
\end{proposition}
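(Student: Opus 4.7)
The plan is to describe two separate $\fpt$ algorithms, one for each parameterization, both exploiting the equivalence to \textsc{Set Cover} established in the preceding proposition.

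For the parameter $m$ (number of candidates), the argument is essentially brute force: I would enumerate all $2^m$ subsets of $C$, check for each whether every voter approves at least one of its members (in $O(nm)$ time), and return the smallest one that does. The total running time is $O(2^m \cdot nm)$, which is $\fpt$ in $m$.

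For the parameter $n$ (number of voters), I would first group candidates by their \emph{approval pattern}: for each candidate $c$, let $N(c) = \{v_i \in V : c \in v_i\}$, so $N(c) \subseteq V$ can take at most $2^n$ distinct values. Two candidates with the same approval pattern are interchangeable for the purpose of covering voters, and including more than one of them is wasteful; hence it suffices to retain a single representative candidate for each realized pattern. This reduces the instance to at most $2^n$ candidates. From here one can either brute-force over $2^{2^n}$ subsets of representatives, or, following the strategy announced for Theorem~\ref{thm:threshold-ilp}, formulate an ILP with one binary variable $x_P$ per pattern $P \subseteq V$ and a covering constraint $\sum_{P \ni v_i} x_P \geq 1$ for each voter $v_i$, minimizing $\sum_P x_P$. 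The ILP has at most $2^n$ variables and constraints, so by Lenstra's algorithm it is solvable in time $\fpt$ in $n$.

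The only place that requires a small but genuine argument is the pattern-collapsing step for the parameterization by $n$: I would verify that if an optimal cover uses several candidates with the same approval pattern, all but one can be dropped without losing coverage, and conversely that any cover over the representatives lifts to a cover in the original instance of the same size. Once this equivalence is in place, both algorithms are straightforward, and neither poses a real obstacle; the main technical point is simply to be explicit that $2^n$ (or $2^{2^n}$) depends only on the parameter, so the running times qualify as $\fpt$.
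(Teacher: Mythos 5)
Your proposal is correct and follows essentially the same route as the paper: brute force over all $2^m$ committees for the candidate parameterization, and the ``candidate types'' collapse (at most one candidate per approval pattern, then exhaustive search over the at most $2^n$ representatives) for the voter parameterization. The ILP variant you mention as an alternative is exactly the technique the paper reserves for Theorem~\ref{thm:threshold-ilp}, but it is not needed here.
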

\begin{proof}
  The result for the number of candidates follows via a straightforward
  brute-force algorithm.
  For parameterization by the number of voters, we invoke the
  ``candidate types'' idea of Chen et
  al.~\shortcite{che-fal-nie-tal:c:few-voters}: There are at most
  $2^n$ ``candidate types'' (where the type of a candidate is simply
  the set of voters that approve of him or her).
  Then, we observe that it suffices to consider at most one candidate of each type,
  since a winning committee certainly never contains two candidates of the same type because we
  could remove one). In $\fpt$-time, we try all possible committees of
  at most $2^n$ candidates (of different types).
\end{proof}

\begin{experiment}
  By applying Experiment~\ref{exp:main} to MRC, we obtain that the
  average committee size is $2.68$ with standard deviation $0.46$.
  Since the rule is $\np$-hard, we have used the brute-force algorithm
  to try all possible committees. We also present results for other
  probabilities of approving each candidate (see
  Figure~\ref{fig:size}). A positive feature of this rule is that the
  size of a winning committee does not depend much on the number of
  voters.
\end{experiment}

We can also use the standard greedy algorithm for \textsc{Set Cover}
to find approximate MRC committees; indeed, we view this algorithm
as a voting rule in its own right.

\begin{description}
\item[GreedyMRC.] Under GreedyMRC, we start with an empty committee
  and perform a sequence of iterations. In each iteration we (a) add
  to the current committee a candidate $c$ that is approved by the
  largest number of voters, and (b) we remove the voters that approve
  $c$ from consideration. After we have removed all voters with
  nonempty approval ballots, we output the resulting committee
  (formally, the rule outputs all the committees that can be obtained
  by breaking the internal ties in some way).
\end{description}

\begin{experiment} By connection to \textsc{Set Cover}, GreedyMRC is
  guaranteed to find a committee that is at most a factor $O(\log m)$
  larger than the one given by the exact MRC (where $m$ is the number
  of candidates).  In our experiment, with approval probabilities in
  $\{0.05, 0.1, \ldots, 0.95\}$, %
  the average sizes of the GreedyMRC committees where no more than 8\%
  larger than the average sizes of the MRC ones (for the case of 20
  voters) or no more than 11\% larger (for the case of 100 voters).
\end{experiment}

MRC and GreedyMRC appear to
be  well suited for choosing small committees of representative;
our experiments confirm this intuition.\medskip

Recall that the function $t_1(x)$ is such that 
$t_1(0)=0$ and $t_1(k) = 1$ for each $k \geq 1$.
Then, consider the $(0,t_1)$-NAV rule, which elects all the
committees that contain candidates approved by all the voters.  While
the empty set is trivially a winning committee under this rule, it is
more interesting to ask about the largest winning committee; we refer
to the rule that outputs the largest winning $(0,t_1)$-NAV rule as the
unanimity rule:

\begin{description}
\item[Unanimity Voting (UV).] Under the unanimity rule, we output the
  committee of all the candidates approved by all the voters.
\end{description}

While computing the smallest $(t_1,0)$-NAV
winning committee is hard (Proposition~\ref{mrchard}), it is easy to compute the (unique) largest
$(0,t_1)$-NAV winning committee in polynomial time (i.e., there is
a polynomial-time algorithm for UV).

\begin{experiment}
  As expected, in our experiment it never happened that some candidate
  was approved by all the voters (the probability of some candidate
  being approved by all $20$ voters is $20\cdot 2^{-20}$ %
  and we considered only 10,000 elections).
\end{experiment}

Both for $(t_1,0)$-NAV and for $(0,t_1)$-NAV, it is trivial to compute
\emph{some} winning committee (the set of all candidates in the former
case and the empty set in the latter). In general, however, this is not 
the case.

\begin{theorem}\label{thm:hard-nav}
  There exists an $(f,g)$-NAV rule for which deciding if there exists
  a committee with at least a given score is $\np$-hard.
\end{theorem}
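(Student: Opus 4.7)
The plan is to exhibit specific $f, g$ and reduce Independent Set on $3$-regular graphs (which is $\np$-hard) to the decision problem for the resulting $(f,g)$-NAV rule. I would take $f(x) = x$ and $g(x) = \binom{x}{2}$; both are non-decreasing with $f(0) = g(0) = 0$. First I would expose the quadratic structure of the score: swapping the order of summation gives
\[
\sum_i \binom{|S \cap \overline{v_i}|}{2} = \sum_{\{c, c'\} \subseteq S} \overline{w}(c, c'),
\]
where $\overline{w}(c, c')$ counts voters disapproving of both $c$ and $c'$. Therefore $\mathrm{score}(S) = \sum_{c \in S} a(c) - \sum_{\{c, c'\} \subseteq S} \overline{w}(c, c')$, a pseudo-Boolean quadratic with non-positive pairwise coefficients.

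Given a $3$-regular graph $G = (V, E)$ with $n_0 := |V|$ and target $k$, I would construct an election with $C = V$ and, for positive integers $T$ and $M$ to be fixed below, add $T$ voters with ballot $\{v\}$ for each $v \in V$ and $M$ voters with ballot $V \setminus \{u, v\}$ for each edge $\{u, v\} \in E$. A direct count yields the uniform value $a(v) = T + Mp$ (with $p := |E| - 3$) and $\overline{w}(u, v) = T(n_0 - 2) + M \cdot \mathbb{1}[\{u, v\} \in E]$, whence
\[
\mathrm{score}(S) = |S|\bigl(T + Mp\bigr) - \tbinom{|S|}{2} T(n_0 - 2) - M \cdot |E[S]|,
\]
where $E[S]$ denotes the edges of $G$ inside $S$.

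I would then set $T = p + 1$ and $M = k(n_0 - 2) - 1$ and write $h(j) := j(T + Mp) - \binom{j}{2} T(n_0 - 2)$ for the concave size profile arising from the first two terms of the score. The substitutions give $h(k+1) - h(k) = 1 - k(n_0 - 2) < 0$ and $h(k) - h(k-1) = 1 + (n_0 - 2)(p - k + 1) > 0$ (both inequalities hold for any 3-regular instance with $n_0 \geq 4$, and IS remains $\np$-hard even on such graphs), so $h$ is strictly maximized at $j = k$. Since $\mathrm{score}(S) = h(|S|) - M \cdot |E[S]|$ with $M \geq 1$, every non-independent committee strictly loses at least $M$ from its concave value, and every independent committee of size $\neq k$ already has $h(|S|) < h(k)$. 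Thus the maximum score is exactly $h(k)$ precisely when $G$ has an independent set of size $k$, and asking whether some committee attains score at least $h(k)$ decides the given Independent Set instance.

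The main obstacle will be the calibration itself: $T$ and $M$ must be polynomial-size positive integers that simultaneously pin the peak of $h$ at the queried $k$ and make the per-edge penalty $M$ large enough that trading an independent committee of the right size for any larger committee (which must contain at least one edge once $k > \alpha(G)$) incurs a strict loss. The explicit substitutions above are engineered to certify both inequalities and, modulo routine padding to keep $n_0$ even and large enough for the strict-concavity check, complete the reduction.
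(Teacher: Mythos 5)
Your reduction is correct, but it takes a genuinely different route from the paper. The paper fixes two \emph{bounded} step functions ($f$ jumps from $0$ to $4$ at one approval; $g$ takes values $0,1,2$ and saturates at two disapprovals) and reduces from \textsc{X3C}, with each element spawning two ``set voters'' and one ``antiset voter'' whose combined contribution distinguishes elements covered zero times, once, or more than once; correctness rests on a case analysis plus a counting argument ruling out small committees. You instead take $f(x)=x$ and $g(x)=\binom{x}{2}$ and exploit the resulting exact quadratic decomposition $\score(S)=\sum_{c\in S}a(c)-\sum_{\{c,c'\}\subseteq S}\overline{w}(c,c')$, reducing from \textsc{Independent Set} on cubic graphs; your calibration of $T$ and $M$ checks out ($h(k+1)-h(k)=1-k(n_0-2)<0$ and $h(k)-h(k-1)=1+(n_0-2)(p-k+1)>0$ both hold since $k\le n_0\le |E|-2=p+1$ for $n_0\ge 4$, and $M\ge 1$), so the maximum score equals $h(k)$ exactly when a size-$k$ independent set exists, with no slack to track. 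What each buys: the paper's result is arguably stronger in that hardness holds already for $f$ and $g$ of constant range (a rule that cannot distinguish two disapprovals from twenty), whereas your $g$ must grow quadratically; in exchange your argument is an exact identity rather than a case analysis with estimates, and it makes transparent \emph{why} nonlinear $g$ breaks the per-candidate independence property---quadratic $g$ turns winner determination into quadratic pseudo-Boolean optimization with nonpositive pairwise terms. Two trivial remarks: the parenthetical about padding to keep $n_0$ even is unnecessary (the handshake lemma forces $n_0$ even for $3$-regular graphs), and you should dispose of the degenerate targets $k=0$ and $k>n_0$ up front; neither affects correctness.
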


\begin{proof}
  We consider specific functions $f$ and $g$ and show that for the
  corresponding $(f,g)$-NAV rule it is $\np$-hard to decide if there
  exists a committee with at least a given score.  The specific
  functions $f$ and $g$ we consider are as follows:
\[
  f(x) =
  \begin{cases}
    0, & x = 0 \\
    4, & x \geq 1
  \end{cases}
\quad\quad  g(x) =
  \begin{cases}
    0, & x = 0 \\
    1, & x = 1 \\
    2, & x \geq 2
  \end{cases}
\]

To show $\np$-hardness,
we reduce from the $\np$-hard \textsc{X3C} problem.
In it, we are given sets $\mathcal{S} = \{S_1, \ldots, S_n\}$ over elements $b_1, \ldots, b_n$.
Each set contains exactly three integers and each elements is contained in exactly three sets.
The task is to decide whether there is a set of sets $S' \subseteq \mathcal{S}$ such that each element $b_i$
is covered exactly once.
We assume, without loss of generality, that $n > 39$.

Given an instance of \textsc{X3C} we create an election as follows.
For each set $S_j$ we create a candidate $S_j$.
For each element $b_i$ we create three voters: $v_i^1$ $v_i^2$, and $v_i^3$;
$v_i^1$ and $v_i^2$ are referred to as \emph{set voters} while $v_i^3$ is referred to as \emph{antiset voter}.
Both voters $v_i^1$ and $v_i^2$ approve exactly the candidates corresponding to the sets which contain $b_i$,
while the voter $v_i^3$ approves exactly the candidates corresponding to the sets which do not contain $b_i$.
With respect to the reduced election,
we ask whether a committee with score at least $7n$ exists.
This finishes the description of the reduction.
Next we prove its correctness.

Let $C$ be a committee for the reduced election and let $b_i$ be some element of the \textsc{X3C} instance.
First we show that $C$ has at least six candidates in it.
If it is not the case,
then, since each set $S_j$ covers exactly three elements,
it follows that there are at least $n - 18$ elements not covered by $C$.

Let $b_i$ be an element not covered by $C$. Then, the voters $v_i^1$, $v_i^2$, and $v_i^3$ corresponding to $b_i$
give at most $2$ points to $C$. If $C = \emptyset$ then each voter corresponding to $b_i$ gives $0$ points to $C$.
Otherwise, if $C \neq \emptyset$, then each set voter (each of $v_i^1$ or $v_i^2$) gives at most $-1$ points to $C$,
while the antiset voter gives at most $4$ points. Thus, the voters corresponding to $b_i$ give at most $2$ points to $C$.

There are at most $18$ elements which are covered by $C$.
For each $b_i$ which is covered by $C$, each of the voters $v_i^1$, $v_i^2$, and $v_i^3$ corresponding to $b_i$
give at most $4$ points to $b_i$ (since this is the maximum number of points any voter gives to any committee).
Thus, the voters corresponding to $b_i$ give at most $12$ points to $C$.

Summarizing the above two paragraphs,
we have that the total score of $C$ which has at most six candidates is at most $(3k - 18) \cdot 2 + 18 \cdot 12$.
Since we assume, without loss of generality, that $n  > 39$, we have that this quantity is strictly less than $7n$.
Therefore, from now on we assume that $C$ has at least six candidates in it.

Thus,
let $C$ be a committee with at least six candidates in it and let $b_i$ be an element.
Let $V_i = \{v_i^1, v_i^2, v_i^3\}$
and
consider the following four cases depending on the number of times $b_i$ is covered by the sets $S_j$ corresponding to the candidates in $C$.

\begin{itemize}

\item \textbf{$\boldsymbol b_i$ is not covered by $C$:}
In this case, the score given to $C$ by $V_i$ is at most $(-2) + (-2) + 4 = 0$.
To see this, observe that each of the set voters ($v_i^1, v_i^2$) gives to $C$ exactly $-2$ points,
since they do not approve any candidate from $C$ but disapprove all candidates in $C$;
further, observe that the antiset voter ($v_i^3$) gives to $C$ at most $4$ points,
as this is the maximum number of points any voter can give to any committee.

\item \textbf{$\boldsymbol b_i$ is covered exactly once by $C$:}
In this case, the score given to $C$ by $V_i$ is $2 + 2 + 4 - 1 = 7$.
To see this, observe that each of the set voters ($v_i^1, v_i^2$) gives to $C$ exactly $2$ points,
since they approve one candidate from $C$ (the one candidate corresponding to the one set covering $b_i$)
and disapprove all other candidates in $C$;
further, observe that the antiset voter ($v_i^3$) gives to $C$ exactly $3$ points,
since it approves at least one candidate in $C$ and disapprove exactly one candidate in $C$ 
(the one candidate corresponding to the one set covering $b_i$).

\item \textbf{$\boldsymbol b_i$ is covered more than once by $C$:}
In this case, the score given to $C$ by $V_i$ is $2 + 2 + 4 - 2 = 6$.
To see this, observe that each of the set voters ($v_i^1, v_i^2$) gives to $C$ exactly $2$ points,
since they approve more than one candidate from $C$ (the two or three candidates corresponding to the two or three sets covering $b_i$)
and disapprove all other candidates in $C$;
further, observe that the antiset voter ($v_i^3$) gives to $C$ exactly $2$ points,
since it approves at least one candidate in $C$ and disapprove two or three candidates in $C$ 
(the two or three candidates corresponding to the two or three sets covering $b_i$).

\end{itemize}

As there are exactly $n$ elements,
it follows from the case analysis above that a committee $C$ with score at least $7n$ shall correspond to
an exact cover.
\end{proof}

Naturally, one can come up with many other interesting variants of the
generalized Net-Approval voting rules.
We recommend analysis of this class of rules for future research.

\section{(Net-)Capped Satisfaction and FirstMajority}

Kilgour and Marshall~\cite{kil-mar:b:approval-committees}
introduced the following rule in the context of electing committees of
fixed size, and Kilgour~\cite{kil:j:variable-size-committee}
recalled it in the context of elections with a variable number of
winners, suggesting its net version.
\begin{description}
\item[Capped Satisfaction Approval (CSA).]  The Capped Satisfaction
  Approval (CSA) score of a committee $S$ is defined to be $ \sum_{v_i
    \in V}\frac{|S \cap v_i|}{|S|}.$ The committees with the highest
  score tie as co-winners.

\item[Net Capped Satisfaction Approval (NCSA).]  The NCSA rule uses
  the ``net'' variant of CSA score; specifically, the score of a
  committee $S$ is defined to be 
  $
  \sum_{v_i \in V}\frac{|S \cap  v_i|}{|S|} - \frac{|S \cap \overline{v_i}|}{|S|}
  $
  and the committees with the highest score tie as co-winners.
\end{description}

In the definitions above, the idea behind dividing the scores by the
size of the committee is to ensure that committees which are too
large will not be elected.  Unfortunately, for the rules as defined by
Kilgour~\cite{kil:j:variable-size-committee}, this effect is too
strong, leading mostly to committees containing only the candidate(s)
with the highest approval score. We explain why this is the case and
suggest a modification.

Consider an election $E = (C,V)$ with candidate set $C = \{c_1,
\ldots, c_m\}$ and preference profile $V=(\row vn)$. Let
$\approval(c_1), \ldots, \approval(c_m)$ be the approval scores of the
candidates, and, without loss of generality, assume that
$\approval(c_1) \ge \approval(c_2)\ge \ldots\ge \approval(c_m)$.
Note that, if there are no ties regarding the approval scores, then
for each $k$, the highest-scoring CSA committee of size~$k$ is simply
$S_k = \{c_1, \ldots, c_k\}$ and its score is $ \sum_{v_i \in
  V}\frac{|S_k \cap v_i|}{|S_k|}= \frac{1}{k} \sum_{v_i \in V} |S_k
\cap v_i|=\frac{\approval(c_1)+\ldots+\approval(c_k)}{k}.  $ This
value, however, never increases with $k$ and, so, typically CSA
outputs very small committees (which contain only the candidates with
the highest approval score; the same reasoning applies to NCSA).
Thus, we introduce the $q$-CSA and the $q$-NCSA rules, where $q$ is a
real number, $0 \leq q \leq 1$, and (a) the $q$-CSA score of a
committee $S$ in election $E = (C,V)$ is $ \sum_{v_i \in V}\frac{|S
  \cap v_i|}{|S|^q}$, and (b) the $q$-NCSA score of this committee is
$ \sum_{v_i \in V}\frac{|S \cap v_i|}{|S|^q} - \frac{|S \cap
  \overline{v_i}|}{|S|^q}$. We note that for $q=1$ these rules are,
simply, CSA and NCSA, whereas $0$-NCSA is NAV and $0$-CSA is a rule
that outputs the committee that includes all the candidates that
receive any approvals.

  By the reasoning above, for each rational value of $q$, both $q$-CSA
  and $q$-NCSA are polynomial-time computable (using notation from
  previous paragraph, it suffices to consider the committees $S_1, S_2,
  \ldots, S_m$ and output the one with the highest $q$-CSA or $q$-NCSA
  score, respectively).

  \begin{proposition}
    For each rational value of $q$, there is a polynomial-time
    algorithm that, given an election, computes a winning committee for
    $q$-CSA and for $q$-NCSA.
  \end{proposition}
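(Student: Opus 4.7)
The plan is to follow the outline already sketched in the paragraph preceding the proposition, by showing that only $m$ candidate committees need to be examined, namely the ``prefixes'' $S_1 \subseteq S_2 \subseteq \cdots \subseteq S_m$ obtained after sorting the candidates by approval score in non-increasing order.

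The key observation I would establish first is that both the $q$-CSA score and the $q$-NCSA score of a committee $S$ are linear functions of the approval scores of its members, once the size $|S|$ is fixed. Indeed, swapping summation order gives
\[
\sum_{v_i \in V} \frac{|S \cap v_i|}{|S|^q} \;=\; \frac{1}{|S|^q}\sum_{c \in S} \approval(c),
\]
and similarly, since $|S \cap \overline{v_i}| = |S| - |S \cap v_i|$,
\[
\sum_{v_i \in V}\left(\frac{|S \cap v_i|}{|S|^q} - \frac{|S \cap \overline{v_i}|}{|S|^q}\right) \;=\; \frac{1}{|S|^q}\sum_{c \in S}\bigl(2\,\approval(c) - n\bigr).
\]
In both expressions, conditional on $|S| = k$, the score is a strictly increasing function of $\sum_{c \in S} \approval(c)$. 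Hence, among all committees of size $k$, the one maximizing the $q$-CSA (resp.\ $q$-NCSA) score is precisely $S_k = \{c_1,\ldots,c_k\}$, the set of the $k$ candidates with the highest approval scores (ties broken arbitrarily).

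The algorithm is then immediate: compute $\approval(c)$ for every $c \in C$ in polynomial time, sort the candidates in non-increasing order of approval score, and for each $k \in \{1, \ldots, m\}$ compute the score of $S_k$ using the closed-form expressions above. Output an $S_k$ that attains the maximum score. Since $q$ is a fixed rational number, the quantity $|S|^q$ can be represented and compared exactly (for example, by comparing the rational powers via cross-multiplying with appropriate integer powers, or by computing with sufficient precision to distinguish the finitely many candidate values), so the whole procedure runs in polynomial time in $|C|$ and $|V|$.

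The only subtlety I anticipate is the exact comparison of the $m$ scores of the form $\frac{A_k}{k^q}$, where $A_k = \sum_{i \le k}\approval(c_i)$ (or its net analogue) and $q = a/b$ is a fixed rational; comparing $A_k / k^{a/b}$ with $A_{k'} / (k')^{a/b}$ reduces to comparing $A_k^b \cdot (k')^a$ with $A_{k'}^b \cdot k^a$, which is a comparison of integers of polynomial bit length. This handles the numerical issue cleanly and completes the argument.
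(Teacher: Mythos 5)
Your proposal is correct and follows essentially the same route as the paper: the paper's justification is precisely the preceding paragraph's observation that for each fixed size $k$ the optimal committee is the prefix $S_k$ of the candidates sorted by approval score, so only the $m$ committees $S_1,\ldots,S_m$ need to be compared. Your additional discussion of how to compare the scores $A_k/k^{a/b}$ exactly is a detail the paper glosses over, and is a welcome (and correct, modulo handling signs of the net scores) addition rather than a divergence.
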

  
\nimrodbug{  
\begin{figure}
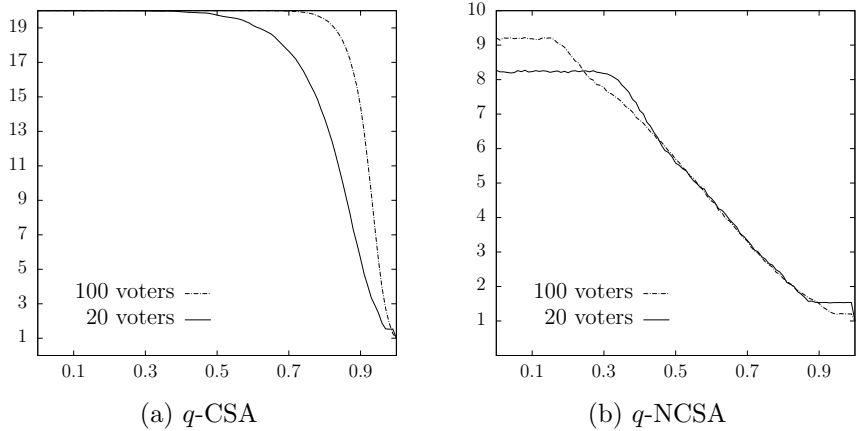

  \center
  \setlength{\tabcolsep}{0pt}
  \begin{tabular}{cc}
  \scalebox{0.6}{\input{csa_size.tex}}&
  \scalebox{0.6}{\input{ncsa_size.tex}}\\
  (a) $q$-CSA &
  (b) $q$-NCSA
  \end{tabular}
  
  \caption{\label{fig:csa-ncsa-sizes}Average committee sizes
    ($y$-axis) under $q$-CSA and $q$-NCSA rules for different values
    of $q$ ($x$-axis); see Experiment~\ref{exp:main} for information
    on how the elections were generated.}
\end{figure}}

\begin{experiment}
  To obtain some better understanding of the influence of the
  parameter $q$ on the size of the committees elected according to $q$-CSA
  and $q$-NCSA, we have repeated Experiment~\ref{exp:main} (with
  approval probability $p = \nicefrac{1}{2}$) for these rules for $q$
  values between $0$ and $1$ with step $0.01$.  The sizes of the
  average committee that we obtained are presented in
  Figure~\ref{fig:csa-ncsa-sizes}. The figures show results for the
  case of $20$ candidates and either $20$ or $100$ voters. While the
  average committee size for $q$-NCSA does not depend very strongly on the
  number of voters (and its dependence on $q$ is appealing), the
  results for $q$-CSA are worrying. Not only does the rule elect
  (nearly) all candidates for most values of $q$, but also for the
  values where it is more selective (e.g., $q=0.9$), the average size
  of its committees depends very strongly on the number of voters.
  In Figure~\ref{fig:size} we show average sizes of
  $0.9$-CSA and of $0.9$-NCSA committees, depending on the probability of
  candidate approval. These figures confirm our worries regarding $q$-CSA
  rules. While the dependence of the average committee size on the
  candidate approval probability for $0.9$-NCSA has the same nature
  irrespective of the number of voters (it is, roughly speaking,
  convex both for $20$ and $100$ voters), the same dependence for
  $0.9$-CSA changes its nature (from roughly convex for the case of
  $20$ voters to roughly concave for the case of $100$ voters).
  In Table~\ref{tab:collection} we also show average committee sizes
  for $0.5$-CSA and $0.5$-NCSA for the candidate approval probability
  $p=\nicefrac{1}{2}$.
\end{experiment}

Given the above experiments, 
we believe that for practical applications, where we may have limited
control on the number of candidates, the number of voters, and the
types of the votes cast, choosing an appropriate value of the
parameter $q$ for $q$-CSA rules (e.g., to promote committees close to
a particular size) would be very difficult. On the other hand,
$q$-NCSA might be robust enough as to be practical.

One could consider generalized variants of the $q$-CSA and $q$-NCSA
rules in the same way as we have considered generalized NAV rules. We
leave this as future work and we conclude the section by considering a
different rule of Kilgour~\cite{kil:j:variable-size-committee},
which is not an (N)CSA rule, but which is somewhat similar since it
also chooses a certain number of candidates with the highest approval
scores.

\begin{description}
\item[FirstMajority.]
  Consider an election with candidates $C = \{c_1, \ldots, c_m\}$.
  For each $c_i \in C$ denote by $s(c_i)$ the approval score of
  $c_i$.  Reorder the candidates so that $\approval(c_1) \geq
  \approval(c_2) \geq \cdots \geq \approval(c_m)$.  The FirstMajority
  rule outputs the smallest committee of the form $\{c_1,c_2, \ldots,
  c_j\}$ such that $\sum_{t=1}^{j} \approval(c_t) > \sum_{q=j+1}^{m}
  \approval(c_q)$.  Note that if some candidates are approved by the
  same number of voters then this rule may return more than one
  committee, corresponding to the various possible ways of reordering the
  candidates.
\end{description}

The very definition of FirstMajority gives a polynomial-time algorithm for
computing its winning committees.

\begin{proposition}
  There is a polynomial-time algorithm that finds some winning
  committee under the FirstMajority rule.
\end{proposition}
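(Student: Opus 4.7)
The plan is to turn the definition of FirstMajority directly into an algorithm and verify that each step runs in polynomial time. First, I would compute the approval score $s(c_i)$ of every candidate $c_i$ by a single scan over the preference profile: for each voter $v$, and each candidate $c \in v$, increment $s(c)$. This takes time $O(mn)$ and produces the vector of approval scores.

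Next, I would sort the candidates in non-increasing order of approval score, breaking ties arbitrarily, in time $O(m \log m)$. After this step we have the reordering $s(c_1) \ge s(c_2) \ge \cdots \ge s(c_m)$ required by the rule. Let $T = \sum_{i=1}^{m} s(c_i)$, which is computed in time $O(m)$ (and note that $T$ is bounded by $mn$, so all arithmetic remains polynomially bounded).

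Then I would scan $j = 1, 2, \ldots, m$, maintaining the running partial sum $P_j = \sum_{t=1}^{j} s(c_t)$, which is updated in constant time per step. The condition $\sum_{t=1}^{j} s(c_t) > \sum_{q=j+1}^{m} s(c_q)$ is equivalent to $2 P_j > T$ and can thus be tested in constant time per $j$. I output the committee $\{c_1, \ldots, c_j\}$ for the smallest $j$ at which this inequality holds; such a $j$ exists because $P_m = T > T/2$ whenever at least one voter approves at least one candidate (and the degenerate all-zero case can be handled trivially by returning the empty committee or $C$ itself as dictated by the rule).

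There is essentially no obstacle here: the rule's definition is operational and all the primitive operations (scoring, sorting, prefix sums, and a threshold test) are classical polynomial-time tasks. The total runtime is $O(mn + m \log m)$, which establishes the claim.
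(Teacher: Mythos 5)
Your proposal is correct and matches the paper's approach exactly: the paper simply observes that the very definition of FirstMajority yields a polynomial-time algorithm, and your write-up spells out the obvious implementation (score, sort, prefix-sum threshold test) with the right running-time bounds.
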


\begin{experiment}
  Under our experimental setup (see Experiment~\ref{exp:main}), on the
  average, the FirstMajority rule outputs committees of size $9.51$
  (with standard deviation $0.43$). Further, the size of the committee
  is almost independent of the number of voters and the candidate
  approval probability (see Figure~\ref{fig:size}).
\end{experiment}

\section{Threshold Rules}

We conclude our discussion by considering the threshold rules of
Fishburn and Peke\v{c}~\cite{fishburn2004approval}. 
Let $t \colon \naturals \rightarrow \naturals$ be some function 
referred to as the \emph{threshold function}.  The $t$-Threshold rule
is defined as follows.
\begin{description}
\item[$\boldsymbol{t}$-Threshold.] Consider an election $E =
  (C,V)$. Under the $t$-Threshold rule, we say that a voter $v_i \in
  V$ approves a committee $S$ if $|S \cap v_i| \geq t(|S|)$.  The
  $t$-Threshold rule outputs those committees that are approved by the
  largest number of voters. 
\end{description}
We consider the following three (in some sense extreme) examples of
threshold functions: (a) the unit function $t_\unit = t_1$ (recall the
discussion of generalized net-approval rules);
(b) the majority  function, $t_{\maj}(k) = \nicefrac{k}{2}$;
and (c) the full  function, $t_{\full}(k) = k$.

The $t_\unit$-Threshold rule is very similar to MRC because,
under the $t_\unit$ threshold function, a voter approves a committee
if it includes at least one candidate that this voter approves. Thus
the rule outputs all committees $S$ such that each voter with a
nonempty approval ballot approves some member of $S$ (MRC outputs the
smallest of these committees). Thus finding the largest winning
committee is easy (take all the candidates), but finding the smallest
one is hard (as then we have the MRC rule).

On the other hand, the $t_\full$-Threshold rule 
outputs exactly those committees $S$ that (a) each
candidate in $S$ has the highest approval score and (b) all the
candidates in $S$ are approved by the same group of voters. %
It seems, however, that AV is a simpler and more natural rule than
the $t_\full$-Threshold rule.

Finally, we consider the $t_\maj$-Threshold rule, %
introduced and studied by Fishburn
and Peke\v{c}~\cite{fishburn2004approval}; $t_\maj$-Threshold winning
committees receive broad support from the voters, and---as suggested
by Fishburn and Peke\v{c}---should be ``of moderate size''.  Computing
$t_\maj$-Threshold winning committees is $\np$-hard, but there are
$\fpt$ algorithms.

\begin{theorem}
  The problem of deciding if there is a nonempty committee that
  satisfies all the voters under the $t_{\maj}$-threshold rule is
  $\np$-hard.
\end{theorem}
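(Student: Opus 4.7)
The plan is to reduce from the $\np$-complete \textsc{Set Cover} problem, in the spirit of the \textsc{Set Cover}-based construction used in Proposition~\ref{mrchard} for MRC. Given an instance $(U,\calS,k)$, I would build an election with one ``set candidate'' $c_j$ per $S_j \in \calS$ together with $k-2$ ``dummy'' candidates $d_1,\ldots,d_{k-2}$; the main voters will be one voter $v_i$ per element $u_i \in U$ whose approval ballot is $\{c_j : u_i \in S_j\} \cup \{d_1,\ldots,d_{k-2}\}$. In addition I would introduce a group of ``balance'' voters whose role is to pin down the size and composition of any committee that satisfies every voter under $t_\maj$.

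The underlying arithmetic I would exploit is as follows. If a candidate committee decomposes as $S = J \cup D'$ with $J$ its set-candidate part and $D'$ its dummy part, then the $t_\maj$-inequality $|v_i \cap S| \ge |S|/2$ for the element voter $v_i$ rewrites as
\[
|\{j \in J : u_i \in S_j\}| \;\ge\; \tfrac{|J|-|D'|}{2}.
\]
If the balance gadget can be made to force $|J|=k$ and $|D'|=k-2$, the right-hand side collapses to exactly $1$, and the surviving constraint is precisely the \textsc{Set Cover} condition: every $u_i$ must be contained in at least one chosen $S_j$. Any size-$k$ set cover, padded with all $k-2$ dummies, would then yield a committee on which every element voter's inequality holds; conversely, any committee on which every voter's inequality holds would correspond to a size-$k$ set cover of $U$.

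The hard part will be engineering the balance gadget. A naive symmetric gadget---say, one group of voters approving only the set candidates combined with one group approving only the dummies---forces $|J|=|D'|$, which collapses the right-hand side of the displayed inequality to $0$ and admits spurious committees such as a lone dummy. To pin down the asymmetric point $(|J|,|D'|) = (k,k-2)$, the balance voters' ballots must contain strategically chosen mixtures of set candidates and dummies, so that the linear inequalities they impose carve out exactly that point in the $(|J|,|D'|)$-plane. Verifying that no unintended committee slips through---in particular no committee consisting solely of dummies, no committee with $|J|<k$ that exploits the vacuity of the element-voter inequality when $|D'| \ge |J|$, and no committee with an off-target proportion of set candidates to dummies---will be the core technical work; once the gadget is in place, the equivalence between nonempty satisfying committees and \textsc{Set Cover} solutions of size at most $k$ follows routinely from the displayed inequality.
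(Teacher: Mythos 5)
There is a genuine gap: the entire difficulty of this theorem lives in the ``balance gadget'' that you explicitly defer, and the gadget as you envision it is not just unfinished but likely unbuildable in the form you describe. Under the $t_\maj$-threshold rule, a balance voter with ballot $B$ imposes the constraint $|B \cap S| \ge |S| - |B \cap S|$, i.e., ``approved committee members outnumber disapproved ones.'' If you want such constraints to depend only on the pair $(|J|,|D'|)$ rather than on \emph{which} set candidates and dummies were chosen, the ballot must be a union of whole candidate classes; but then the only enforceable relations are $|J| \ge |D'|$, $|D'| \ge |J|$, or trivialities. You cannot force the asymmetric point $(|J|,|D'|)=(k,k-2)$ this way: producing an additive offset of $2$ would require a voter who disapproves ``two extra'' committee members that are guaranteed to be present, and forcing specific candidates \emph{into} every satisfying committee is itself not achievable with majority-threshold constraints (a voter approving only $\{e\}$ is satisfiable only when $|S| \le 2$). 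Ballots that approve a strict, specific subset of one class do not yield constraints expressible in the $(|J|,|D'|)$-plane at all, so the ``carve out exactly that point'' step is not a routine verification but an obstruction.

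The paper's construction avoids this trap by never asking for a size mismatch. It uses $k$ filler candidates and exactly two pure-type balancing voters (one approving all fillers, one approving all set candidates), which force the committee to contain equally many fillers and set candidates, say $k'$ of each with $k' \le k$. The needed ``$\ge 1$ approved set candidate'' offset is then produced on the \emph{voter} side: each element $u_i$ is represented by $k$ voters, the $j$th of which approves all fillers \emph{except} $f_j$ together with the sets containing $u_i$. Whichever filler actually sits in the committee, one of these $k$ replicas approves only $k'-1$ of the committee's $k'$ fillers and must therefore be handed an approved set candidate covering $u_i$. In short, the asymmetry you try to encode geometrically in committee sizes is instead encoded combinatorially in the ballots, and that substitution is the missing idea; without it (or an equivalent device) your reduction does not go through.
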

\begin{proof}
 We give a reduction from the \textsc{Set Cover} problem. Let our
 input instance be $I = (U,\calS,k)$, where $U = \{u_1, \ldots,
 u_n\}$ is a set of elements, $\calS = \{S_1, \ldots, S_m\}$ is a
 family of subsets of $U$, and $k$ is a positive integer. Without
 loss of generality, we can assume that $m > k$ (otherwise there
 would be a trivial solution for our input instance).

 We form an election with the candidate set $C = F \cup \calS$, where $F
 = \{f_1, \ldots, f_k\}$ is a set of filler candidates and $\calS$ is
 a set of candidates corresponding to the sets from the \textsc{Set
   Cover} instance (by a small abuse of notation, we use the same
 symbols for $\calS$ and its contents irrespective if we interpret it
 as part of the \textsc{Set Cover} instance or as candidates in our
 elections).  We introduce $kn+2$ voters:
 \begin{enumerate}
 \item The first voter approves of all the filler candidates and the
   second voter approves all the set candidates. We refer to these
   voters as the balancing voters.
 \item For each element $u_i \in U$, we have a group of $k$ voters,
   so that the $j$th voter in this group ($j \in [k]$) approves of
   all the filler candidates except $f_j$, and also of exactly those set
   candidates that correspond to sets containing $u_i$.
 \end{enumerate}
 We claim that there is a nonempty committee $S$ such that every
 voter %
 approves at least half of the members of $S$ (i.e., every voter is
 satisfied) if and only if $I$ is a \emph{yes}-instance.

 Let us assume that $S$ is a committee that satisfies all the voters.
 We note that $S$ must contain the same number of filler and set
 candidates. If it contained more set candidates than filler
 candidates then the first balancing voter would not be satisfied, and
 if it contained more filler candidates than set candidates, then the
 second balancing voter would not be satisfied. Thus, there is a number
 $k'$ such that $|S| = 2k'$, $k' \leq k$, and $S$ contains exactly
 $k'$ filler and $k'$ set candidates.

 We claim that these $k'$ set candidates correspond to a cover of
 $U$. Consider some arbitrary element $u_i$ and some filler candidate
 $f_j$ such that $f_j$ does not belong to $S$ (since $m > k \geq k'$
such candidates must exist). There is a voter that approves all
 the filler candidates except $f_j$ and all the set candidates that
 contain $u_i$. Thus, the committee contains exactly $k'-1$ filler
 candidates that this voter approves and---to satisfy this
 voter---must contain at least one set candidate that contains
 $u_i$. Since $u_i$ was chosen arbitrarily, we conclude that the set
 candidates from $S$ form a cover of $U$. There are at most $k$ of
 them, so $I$ is a \emph{yes}-instance.

 On the other hand, if there is a family of $k' \leq k$ sets
 that jointly cover %
 $U$, then a committee that
 consists of arbitrarily chosen $k'$ filler candidates and the set
 candidates corresponding to the cover satisfies all the voters.
\end{proof}

\begin{theorem}\label{thm:threshold-ilp}
  Let $t$ be a linear function (i.e., $t(k) = \alpha k$ for some
  $\alpha \in [0,1]$).  There are $\fpt$ algorithms for computing the
  smallest and the largest winning committees under the $t$-Threshold
  rule in $\fpt$ time for parameterizations by the number of candidates
  and by the number of voters.
\end{theorem}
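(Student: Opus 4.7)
The plan is to handle the two parameterizations separately, with very different techniques. The case parameterized by the number $m$ of candidates is handled by straightforward brute force: enumerate all $2^m$ subsets $S\subseteq C$, compute in polynomial time the score of $S$ (the number of voters $v_i$ with $|S\cap v_i|\ge \alpha |S|$), and among committees of maximum score report the smallest and the largest. The total running time is $O(2^m\cdot\mathrm{poly}(m,n))$, which is clearly $\fpt$.

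For the parameterization by the number $n$ of voters, the plan is to combine the ``candidate types'' trick from Proposition~\ref{mrchard} with Lenstra's theorem on integer linear programming in fixed dimension. Call two candidates of the same \emph{type} if they are approved by exactly the same subset of voters; there are at most $2^n$ types. For each type $T\subseteq V$, let $m_T$ be the number of input candidates of that type and let $x_T$ be a variable recording how many of them go into the committee, so that $|S|=\sum_T x_T$ and the number of committee members approved by voter $v_i$ equals $\sum_{T\ni v_i} x_T$. Because $t$ is linear, the condition that voter $v_i$ is satisfied by the committee, $|S\cap v_i|\ge \alpha|S|$, becomes a linear inequality
\[
(1-\alpha)\sum_{T\ni v_i} x_T \;-\; \alpha\sum_{T\not\ni v_i} x_T \;\ge\; 0.
\]

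The objective of a $t$-Threshold rule is to maximize the \emph{number} of satisfied voters, which an ILP cannot express directly; the plan is to externalize this by enumerating, over all $2^n$ subsets $V'\subseteq V$, the set of voters we intend to satisfy. For each $V'$, solve the ILP in the $2^n$ variables $(x_T)$ with the bounds $0\le x_T\le m_T$, the linear satisfaction inequality above imposed for every $v_i\in V'$, a constraint $\sum_T x_T\ge 1$ (to exclude the empty committee when needed), and objective to minimize or to maximize $\sum_T x_T$, depending on which extremal winning committee we seek. Since the dimension is $2^n$, a function of the parameter alone, Lenstra's theorem solves each ILP in $\fpt$ time, and we make $2^n$ such calls in total.

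To assemble the final answer, let $k^\ast$ be the largest $|V'|$ for which the ILP is feasible; by construction $k^\ast$ equals the winning score. The smallest (respectively largest) winning committee is then obtained as the minimum (respectively maximum) of the ILP optima over all $V'$ with $|V'|=k^\ast$. The one subtlety to verify is that an optimal committee returned for some such $V'$ is genuinely a winning committee and does not secretly satisfy strictly more voters than $V'$; this follows immediately from the maximality of $k^\ast$. The main obstacle, and the reason the linearity hypothesis on $t$ is needed, is keeping the satisfaction constraints linear in the $x_T$'s: for a nonlinear threshold function the inequality would involve a non-linear dependence on $|S|=\sum_T x_T$ and Lenstra's theorem would no longer apply.
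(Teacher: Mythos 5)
Your proof is correct, and for the parameterization by the number of candidates it coincides with the paper's (brute force over all $2^m$ committees). For the parameterization by the number of voters you also use the paper's two key ingredients---candidate types and Lenstra's fixed-dimension ILP algorithm---but you handle the ``maximize the number of satisfied voters'' objective differently. The paper guesses only the \emph{number} $N$ of satisfied voters, introduces a $0/1$ indicator variable $v_j$ for each voter, deactivates the satisfaction constraint of unsatisfied voters via a big-$M$ term $-(1-v_j)n$, and solves one ILP (with $2^n+n$ variables) per value of $N\in[n]$. You instead guess the exact \emph{set} $V'$ of satisfied voters, which yields $2^n$ ILPs, each in only the $2^n$ type variables and with hard satisfaction constraints; you then correctly recover the winning score as the largest feasible $|V'|$ and optimize the committee size over all $V'$ attaining it. Both variants run in $\fpt$ time in $n$. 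Your version pays with exponentially more ILP calls but each ILP is cleaner, and it incidentally sidesteps a small soundness worry in the paper's construction: the slack constant $n$ in $-(1-v_j)n$ need not dominate $t\bigl(\sum_i x_i\bigr)\le m$ when $m>n$, whereas your formulation has no big-$M$ constant to calibrate. Your closing remark about where linearity of $t$ is used (keeping the satisfaction constraint linear in the $x_T$) matches the role it plays in the paper's argument.
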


\begin{proof}
  For parameterization by the number of candidates it suffices to try
  all possible committees.
  For parameterization by the number of voters, we combine the
  candidate-type technique of Chen et
  al.~\cite{che-fal-nie-tal:c:few-voters} and an integer linear
  programming (ILP) approach.  The \emph{type} of candidate $c$ is the subset
  of voters that approve $c$.  For an election with $n$ voters, each
  candidate has one of at most $2^n$ types. We describe an algorithm
  for computing a committee approved by at least $N$ voters (where $N$
  is part of the input; it suffices to try all values of $N \in [n]$
  to find a committee with the highest score).  We focus on computing
  the largest winning committee.

  Let $E = (C,V)$ be the input election with $n$ voters.  We form an
  instance of the ILP problem as follows.  For each candidate type
  $i$, $i \in [2^n]$, we introduce integer variable $x_i$ (intuitively
  $x_i$ is the number of candidates of type $i$ that are included in
  the winning committee). For each $i \in [2^n]$, we form constraint
  $0 \leq x_i \leq n_i$, where $n_i$ is the number of candidates of
  type $i$ in election $E$. We also add constraint $\sum_{i \in
    [2^n]}x_i \geq 1$ as the winning committee must be nonempty.

  For each voter $j \in [n]$, we define an integer variable $v_j$
  (the intention is that $v_j$ is $1$ if the $j$th voter approves of
  the committee specified by variables $x_0, \ldots, x_{2^n-1}$ and it
  is $0$ otherwise; see also comments below).  For each $j \in [n]$,
  we introduce constraints $0 \leq v_j \leq 1$, and:
  \begin{equation}
  \label{eq:ilp}
  \textstyle
  \left(\sum_{i \in \textrm{types}(v_j)} x_i\right) -
  t\left(\sum_{i \in [2^n]} x_i\right) \geq -(1-v_j)n,
  \end{equation}
  where $\textrm{types}(v_j)$ is the set of all candidate types
  approved by voter $v_j$. To understand these constraints, note that
  $\sum_{i \in [2^n]} x_i$ is the size of the selected committee,
  $\sum_{i \in \textrm{types}(v_j)} x_i$ is the number of committee
  members approved by the $j$th voter, and, thus, Eq.~\eqref{eq:ilp}
  is satisfied either if $v_j = 0$ or $v_j = 1$ and there is an
  integer $k$ such that the $j$th voter approves at least $t(k)$
  members of the selected size-$k$ committee.  We add constraint $v_1
  + \cdots + v_n \geq N$ (i.e., we require that at least $N$ voters
  are satisfied with the selected committee; this also prevents
  satisfying Eq.~\eqref{eq:ilp} by setting $v_j = 0$ for all $j \in
  [n]$).
  
  To compute the largest committee approved by at least $N$ voters, we
  find a feasible solution (for the above-described integer linear
  program) that maximizes $\sum_{i \in [2^n]}x_i$ (we use
  the famous $\fpt$-time algorithm %
  of Lenstra~\cite{len:j:integer-fixed}).
\end{proof}

\nimrodbug{

\begin{figure}
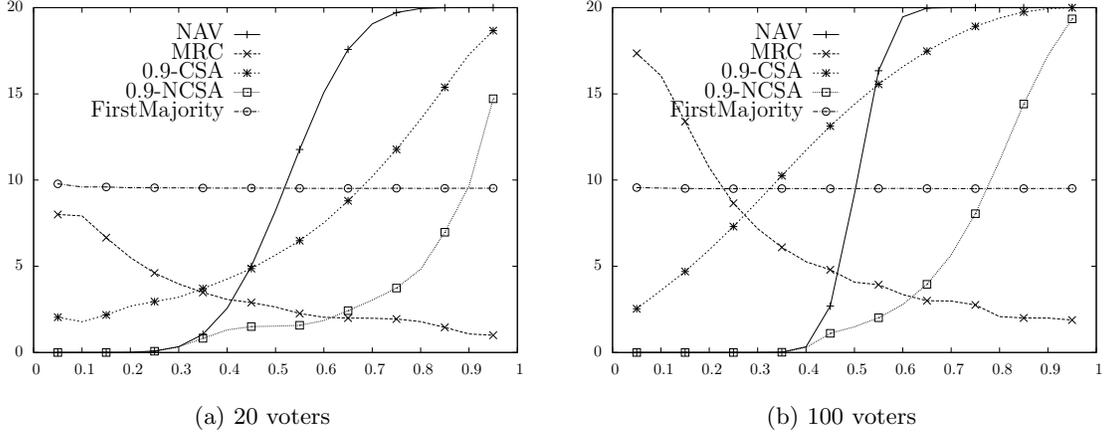

  \begin{subfigure}{0.5\textwidth}
  \scalebox{0.6}{\input{size-20.tex}} 
  \caption{\label{fig:20}20 voters}
  \end{subfigure}
  \begin{subfigure}{0.5\textwidth}
  \scalebox{0.6}{\input{size-100.tex}} 
  \caption{\label{fig:100}100 voters}
  \end{subfigure}
  \caption{\label{fig:size}Average committee sizes for some of our rules (20 candidates and either 20 or 100 voters; approval
    probability is on the $x$-axis).
  }

\end{figure}

}

\begin{table}[t!]
 \center
  \setlength{\tabcolsep}{4pt}
  \begin{tabular}{l|ccc}
    \toprule
    \multirow{2}{*}{rule} & \multicolumn{2}{c}{avg. committee size} & \multirow{2}{*}{complexity}\\
     & \multicolumn{2}{c}{$\pm$ its std. deviation} & \\
    \midrule
          & (20 voters) & (100 voters) & \\[1mm]
    $\nicefrac{2}{3}$-NAV    & $1.02 \pm 1.01$  & $0.01\pm 0.09$& $\p$ \\
    AV   & $1.52 \pm 0.89$   & $1.20 \pm 0.50$& $\p$ \\
    $0.9$-NCSA & $1.52 \pm 0.89$ & $1.50 \pm 0.78$ & $\p$ \\
    MRC  & $2.63 \pm 0.48$   & $4.08 \pm 0.26$ & $\np$-hard \\
    GreedyMRC & $2.75 \pm 0.46$ & $4.55 \pm 0.53$ & $\p$ \\
    $t_{\maj}$-Thr (min) & $2.75 \pm 1.33$ & $\boldsymbol{2.05 \pm 0.34}$ & $\np$-hard \\
    $0.5$-NCSA & $5.57 \pm 2.14$ & $5.57 \pm 2.18$ & $\p$ \\
    $0.9$-CSA & $5.63 \pm 3.02$ & $\boldsymbol{14.67 \pm 2.75}$ & $\p$ \\
    $t_{\maj}$-Thr (max) & $7.68 \pm 3.27$ & $\boldsymbol{2.20 \pm 0.78}$  & $\np$-hard \\
    NAV  & $8.25 \pm 2.19$  & $9.19 \pm 2.23$ & $\p$ \\
    FirstMajority & $9.51 \pm 0.43$ & $9.50 \pm 0.25$ & $\p$ \\
    $0.5$-CSA & $19.74 \pm 0.52$ &  $20.00 \pm 0.00$ & $\p$ \\
    \bottomrule
  \end{tabular}
  \caption{\label{tab:collection}Average committee sizes (see Experiment~\ref{exp:main} for information
    on how the elections were generated). Rules are sorted with respect to the average committee size for $20$ voters (results in bold
    are those that would change their position if we sorted for the average committee size with $100$ voters).
    $t_\maj$-Thr. (min) and $t_\maj$-Thr (max) refer to the smallest and largest committees under the 
    $t_\maj$-Threshold rule.}
\end{table}

\begin{experiment}
  In our experiments, the average size of the smallest
  $t_\maj$-Threshold committee was $2.84$. On the other hand, the
  largest committee contained, on average, $7.52$ candidates. Yet, for
  the case of $100$ voters the difference between the sizes of the
  largest committee and the smallest committee are much more modest
  (see Table~\ref{tab:collection}).
\end{experiment}

Largest winning committees under $t_\maj$-Threshold
are typically of even size (if $S$ is a winning committee of odd size
then it still wins after adding an arbitrary candidate).

\section{Conclusion and Further Research}

We have argued that elections with variable number of winners are
useful and we have analyzed a number of such rules already present in
the
literature %
and provided generalizations for some of them, finding polynomial
algorithms in most cases, but also identifying interesting $\np$-hard
rules.
Further analysis (both axiomatic and computational)
is the most pressing direction for future research.
We also note that in many practical cases there is a societal
preference on the size of the committee to be elected, which is usually
single-peaked. Incorporating this preference into the voting rules is
an interesting direction of research.  
Finally,
as in principle multiwinner voting rules with variable number of winners
cluster the candidates into two sets (those in the winning committee, and the rest),
adapting ideas from data and cluster analysis might prove useful in designing other rules better tailored for this task.

\bibliography{grypiotr2006}

\end{document}